\documentclass[conference]{IEEEtran}
\IEEEoverridecommandlockouts
\usepackage{cite}
\usepackage{amsmath,amssymb,amsfonts}
\usepackage{algorithmic}
\usepackage{graphicx}
\usepackage{textcomp}
\usepackage{xcolor}
\usepackage{algorithm}
\usepackage{algorithmic}
\usepackage{amsthm}
\usepackage{amsmath,amssymb}
\usepackage{svg}
\newtheorem{problem}{Problem}

\def\BibTeX{{\rm B\kern-.05em{\sc i\kern-.025em b}\kern-.08em
    T\kern-.1667em\lower.7ex\hbox{E}\kern-.125emX}}
\begin{document}

\title{Input Matrix Optimization for Desired Reachable Set Warping of Linear Systems\\
{}
}

\author{\IEEEauthorblockN{1\textsuperscript{st} Hrishav Das}
\IEEEauthorblockA{\textit{Department of Aerospace Engineering} \\
\textit{University of Illinois Urbana-Champaign}\\
hrishav2@illinois.edu}
\and
\IEEEauthorblockN{2\textsuperscript{nd} Melkior Ornik}
\IEEEauthorblockA{\textit{Department of Aerospace Engineering} \\
\textit{University of Illinois Urbana-Champaign}\\
mornik@illinois.edu}
}

\maketitle

\begin{abstract} 
Shaping the reachable set of a dynamical system is a fundamental challenge in control design, with direct implications for both performance and safety. This paper considers the problem of selecting the optimal input matrix for a linear system that maximizes warping of the reachable set along a direction of interest. The main result establishes that under certain assumptions on the dynamics, the problem reduces to a finite number of linear optimization problems. When these assumptions are relaxed, we show heuristically that the same approach yields good results. The results are validated on two systems: a linearized ADMIRE fighter jet model and a damped oscillator with complex eigenvalues. The paper concludes with a discussion of future directions for reachable set warping research.
\end{abstract}

\begin{IEEEkeywords}
Reachable Set, Convex Optimization, Linear Systems
\end{IEEEkeywords}

\section{Introduction}

In many engineering applications, a system is built around a fixed architecture, and the engineer retains only limited authority to modify its dynamics after the initial design. Two competing requirements often arise: a \emph{performance} requirement, ensuring that a capable user can drive the system to accomplish a desired task, and a \emph{safety} requirement, ensuring that an uninformed or adversarial user cannot drive it into an unsafe region. Both requirements can be framed in terms of what the system is able to reach. This paper addresses the following question for linear systems: given that the input matrix $B$ is the remaining design variable, how should it be chosen to shape the reachable set in a prescribed direction?

The reachable set of a dynamical system is the set of all states to which the system can be driven from a given initial condition within a fixed time horizon, over all admissible controls. It precisely characterizes what the system can physically achieve. Works on reachable sets can be broadly classified into three categories: (1) computation, (2) application, and (3) optimization. The current manuscript falls into the third category. Computation methods include set propagation \cite{annurev:/content/journals/10.1146/annurev-control-071420-081941}, Hamilton-Jacobi level-set methods \cite{bansal2017hamilton,mitchell2005time}, simulation-based approaches \cite{ramdani2008reachability,duggirala2013verification,fan2016locally}, and time-optimal control \cite{RACZYNSKI20231,RACZYNSKI202331} (the method used here). Reachable sets have found use in controller synthesis \cite{zamani2011symbolic,schurmann2017guaranteeing}, safe reinforcement learning \cite{gu2024review,yu2022reachability,chung2025provably}, and model predictive control \cite{bravo2006robust,gruber2019scalable}, across domains including aerospace, robotics, power systems, and systems biology \cite{annurev:/content/journals/10.1146/annurev-control-071420-081941}.

Reachable set optimization is the most nascent of the three categories, and the one this paper contributes to. The motivation is direct: expanding the reachable set along a performance-critical direction increases the range of achievable tasks, while shrinking it away from an unsafe region enforces a geometric safety constraint before any control law is designed.

We address the following problem: given a linear system whose input matrix $B$ can be selected from a compact admissible set $\mathcal{B}$, find $B^* \in \mathcal{B}$ that maximally warps the reachable set along a prescribed direction $d$. We solve this by showing, via Pontryagin's Maximization Principle (PMP), that the joint optimization over $B$ and the control input reduces to $N$ linear optimization problems, where $N$ is the number of vertices of the control polytope $\mathcal{U}$. 
\newline \indent Several prior works are related. Balashov and Kamalov \cite{balashov2023optimization} study how to choose a starting point so that the reachable set includes a target convex set in minimum time: an alternate problem, concerned with initial conditions rather than input design. Zhou and Baras \cite{7403154,7799325} shape the reachable set by varying the set of admissible control inputs to steer around collision regions; in contrast, our work fixes the control set and optimizes $B$ instead. Zhao et al.\ \cite{zhao2016scheduling} schedule control nodes in a linear time-invariant network to improve controllability, a structurally different problem. Tzoumas et al.\ \cite{tzoumas2015minimal} and Redmond and Parker \cite{redmond1996actuator} place actuators to optimize reachability under bounded input energy; the former uses a Gramian metric, which is a global, volume-like measure rather than a directional one.
\newline \indent The most closely related work is Bird et al.\ \cite{bird2024set}, which optimizes a controller to keep the closed-loop reachable set within a safe zone. Our contribution is complementary: rather than designing a controller, we select the input matrix $B$ itself to directionally warp the open-loop reachable set. The reachable set boundary is computed using time-optimal control \cite{RACZYNSKI202331} (detailed in Section~\ref{computingreachableset}).
\newline \indent The remainder of the paper is organized as follows. Section~\ref{sec:Preliminaries} presents the foundational framework for computing reachable set boundaries using PMP. Section~\ref{sec:problemstatement} formally states the problem addressed in this work. Section~\ref{sec:technicalresults} presents the main result: under two structural assumptions on the system, finding $B^*$ reduces to $N$ linear optimizations, where $N$ is the number of vertices of $\mathcal{U}$. Section~\ref{sec:examples} demonstrates the approach on two systems: a linearized ADMIRE fighter jet and a damped oscillator with complex eigenvalues, chosen to illustrate both the full-theorem regime and the behavior when assumptions are relaxed. Section~\ref{sec:Summaryandconclusion} summarizes the contributions and outlines future directions.

\section{Preliminaries}
\label{sec:Preliminaries}
In this section, we lay the foundation for the formal problem statement. We define the reachable set and explain how to compute its boundary using the Pontryagin Maximimum Principle (PMP).
\subsection{Pontryagin's Maximum Principle}
Consider optimal control of a system whose dynamics are governed by $\dot{X}(t) = f(t, X, u)$, where $X \in \mathbb{R}^n$ and $u \in \mathcal{U} \subset \mathbb{R}^d$ are the state and control, respectively, with $\mathcal{U}$ compact. We assume $f$ is such that for every admissible $u(\cdot)$ and initial condition $X_0 \in \mathbb{R}^n$, the state trajectory exists and is unique on $[0, T]$, where $T > 0$ is a fixed horizon. The objective is to minimize the cost $J = \int_0^T f_0(t, X, u)\,dt$ subject to $X(0) = X_0$. 

Pontryagin's Maximum Principle (PMP) \cite{pontryagin2018mathematical} states that if $u^*(\cdot)$ is optimal, there exists a nonzero continuous co-state $p(t) \in \mathbb{R}^n$, $t \in [0, T]$, such that the co-state dynamics are governed by
\begin{equation}
    \label{CostateDynamics}
    \Dot{P_i} = -\sum_{j=1}^n \frac{\partial \Dot{X}_j P_j}{\partial X_i} - \frac{\partial f_0}{\partial X_i},
\end{equation} where the subscript $i$ denotes the $i$-th coordinate, $\partial/\partial X_i$ is differentiation with respect to the $i$-th component of $X$, and $\dot{X}_j(t) = f_j(t, X, u)$ is the $j$-th component of the dynamics. The Hamiltonian is defined as
\begin{equation}
    \label{hamiltonion}
    \mathcal{H}(p(t), X(t), u(t)) \triangleq p(t)^\top f(t, X, u),
\end{equation}
and the optimality condition requires that $u^*(t) \in \arg\max_{u \in \mathcal{U}}\, \mathcal{H}(p(t), X(t), u)$ for all $t \in [0, T]$. Furthermore, the maximized value of $\mathcal{H}$ is constant along the optimal trajectory. We refer the reader to \cite{pontryagin2018mathematical} for the complete treatment.

\subsection{Reachable Set and its Boundary}
The reachable set of the dynamical system, denoted $\mathcal{R}(T, X_0)$, is the set of all states reachable at time $T$ from an initial condition $X(0) = X_0$ under admissible control inputs \cite{annurev:/content/journals/10.1146/annurev-control-071420-081941}. Formally,
\begin{equation}
    \mathcal{R}(T, X_0) := \left\{ X(T) \in \mathbb{R}^n \,\middle|\, 
    \begin{aligned}
        &\Dot{X}(t) = f(t, X(t), u(t)), \\
        &u(t) \in \mathcal{U} , t \in [0, T]
    \end{aligned}
    \right\}
\end{equation}

\noindent The boundary $\partial \mathcal{R}(T, X_0)$ is understood in the standard topological sense \cite{lee1967foundations}. If $\mathcal{U}$ is convex, then any trajectory passing through a boundary point at time $T$ lies entirely on the boundary for all earlier times: if $X(T) \in \partial \mathcal{R}(T, X_0)$, then $X(t) \in \partial \mathcal{R}(t, X_0)$ for all $0 \leq t \leq T$. A trajectory satisfying this property is called a \emph{boundary trajectory}. 

A key connection to optimal control is the following: let $t^* = \min\{t \geq 0 : X_f \in \mathcal{R}(t, X_0)\}$ be the minimum time needed to reach a point $X_f \in \mathbb{R}^n$. Then $X_f \in \partial \mathcal{R}(t^*, X_0)$, and the trajectory achieving this minimum is a boundary trajectory. It follows that every boundary trajectory solves a time-optimal control problem \cite{RACZYNSKI20231,RACZYNSKI202331} and can therefore be computed via PMP \cite{pontryagin2018mathematical}, as detailed in the next subsection.

\subsection{Computing the Reachable Set Boundary}
\label{computingreachableset}
 
To identify points on $\partial \mathcal{R}(T; X_0)$, which are required to evaluate $G_d(B)$, we use the following procedure based on Raczynski~\cite{RACZYNSKI202331}. Since we seek boundary trajectories, we solve a time-optimal control problem, for which by definition $f_0(t,X,u) = 1$, giving
\begin{equation}
    J = T, \quad f_0(t,X,u) = 1.
    \label{eq:general_cost}
\end{equation}
Substituting $f_0 = 1$ into \eqref{CostateDynamics} eliminates the $\partial f_0/\partial X_i$ term, yielding
\begin{equation}
    \label{CostateDynamics2}
    \Dot{P_i} = -\sum_{j=1}^n \frac{\partial \Dot{X}_j P_j}{\partial X_i}.
\end{equation}
Different solutions of \eqref{CostateDynamics2} correspond to different terminal co-state directions and thereby trace out different boundary trajectories (see Fig.~\ref{fig:dgm}). Raczynski~\cite{RACZYNSKI202331} samples boundary trajectories by drawing initial co-state values $P(0)$ from a probability distribution. In this paper, we instead parameterize boundary trajectories by their terminal co-state value: we set $P(T) = d$ for a chosen unit vector $d \in \mathbb{R}^n$, $\|d\| = 1$. Since $\mathcal{R}(T, X_0)$ stems from a single initial point, it is a convex set \cite{zampieri1994local,polyak2004convexity,Reisig2007mc}. For each unit vector $d$, the endpoint $X_d$ of the boundary trajectory with terminal condition $P(T) = d$ is a point on $\partial\mathcal{R}(T,X_0)$; sweeping over sufficiently many directions $d$ reconstructs the full boundary. The procedure to compute $\partial\mathcal{R}(T, X_0)$ is:
\begin{enumerate}
    \item Select a terminal costate value $P(T) = d \in \mathbb{R}^n$, $\|d\|=1$.
    \item Integrate \eqref{CostateDynamics2} and the system dynamics forward in time, using the control that maximizes $\mathcal{H}$ at every instant. This is a two-point boundary value problem with $X(0) = X_0$ and $P(T) = d$.
    \item Store the terminal state $X_d := X(T)$.
    \item Record $X_d$ as a point on $\partial\mathcal{R}(T, X_0)$. Return to step~1 with a new direction $d$ to collect additional boundary points; terminate once the boundary is reconstructed to the desired resolution.
\end{enumerate}

\section{Problem Statement}
\label{sec:problemstatement}
Consider the family of linear systems
\begin{equation}
    \label{lineardynamics}
    \dot{X}(t) = AX(t) + Bu(t), \quad X(0) = X_0,
\end{equation}
parameterized by $B \in \mathcal{B} \subset \mathbb{R}^{n\times d}$, where $\mathcal{B}$ is compact, $A \in \mathbb{R}^{n\times n}$ is fixed, $0 \in \mathcal{U} \subset \mathbb{R}^d$ is the compact convex control set, and $T > 0$ is the time horizon. We denote the reachable set by $\mathcal{R}_{B}(T; X_0)$.
 
Let $d \in \mathbb{R}^n$ with $\|d\| = 1$ be a direction of interest. The \emph{zero-input trajectory endpoint} is
\begin{equation}
\label{centroid}
c_0 \triangleq X_0 + \int_0^T AX(t)\, dt.
\end{equation}
Since $0 \in \mathcal{U}$, we have $c_0 \in \mathcal{R}_B(T; X_0)$. Let $X_{d,B}$ denote the endpoint of the boundary trajectory with terminal co-state $P(T) = d$, computed by the procedure of Section~\ref{computingreachableset}; the dependence on $B$ is made explicit by the subscript. These quantities are illustrated in Fig.~\ref{fig:dgm}: the blue set is $\mathcal{R}_{B}(T; X_0)$, the interior point $c_0$ is the zero-input endpoint, and the red arrow at $X_{d,B}$ shows the terminal co-state direction $d$.
 
We define a quantity called the \textit{Directional Growth Metric} as follows:
\begin{equation}
\label{DGrowthMetric}
G_d(B) \triangleq d^\top\!\bigl(X_{d,B}-c_0\bigr).
\end{equation} Since $d$ and $c_0$ are independent of $B$, the influence of $B$ on $G_d$ is entirely through $X_{d,B}$. An increase in $G_d(B)$ indicates that the reachable set has expanded in the direction $d$. Note that $G_d$ is continuous in $B$ (since $X_{d,B}$ depends continuously on $B$ via the ODE flow), and since $\mathcal{B}$ is compact, there exists a maximum. We therefore pose the following problem formally:

\begin{problem}
\label{prob:main}
Given the system \eqref{lineardynamics} with $A$ fixed, $B \in \mathcal{B}$, direction $d$, and horizon $T$, find
\begin{equation}
    B^{\ast} \in \arg\max_{B \in \mathcal{B}}\, G_d(B).
\end{equation}
\end{problem}

\begin{figure}
    \centering
    \includegraphics[width=1\linewidth]{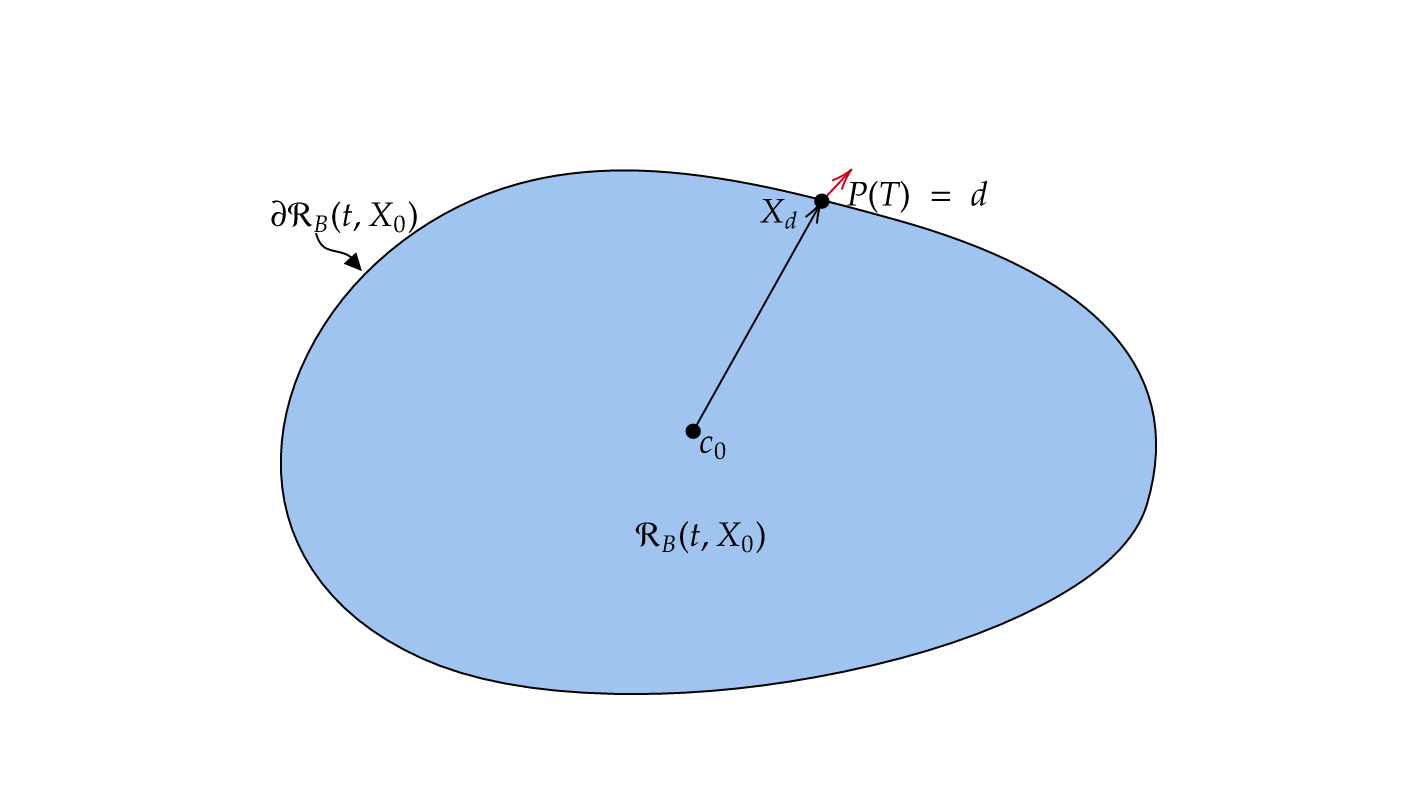}
    \caption{Parameters of the Directional Growth Metric $G_d(B) = d^\top(X_{d,B} - c_0)$. The blue region is the reachable set $\mathcal{R}_B(T; X_0)$, $c_0$ is the zero-input trajectory endpoint (interior point), $X_{d,B}$ is the boundary point reached by the trajectory with terminal co-state $P(T) = d$, and the red arrow indicates the direction $d$.}
    \label{fig:dgm}
\end{figure}

\section{Technical Results}
\label{sec:technicalresults}
We now state and prove the main result: under two structural assumptions, Problem~\ref{prob:main} reduces to $N$ linear optimizations (one per vertex of $\mathcal{U}$), and the global maximizer $B^\ast$ is found by a single comparison across $N$ candidates. The result requires:
\begin{enumerate}
    \item $A$ has real eigenvalues,
    \item $d \in \mathbb{R}^n$, $\|d\|=1$, is an eigenvector of $A^\top$.
\end{enumerate}
 
\newtheorem{theorem}{Theorem}
\newtheorem{definition}{Definition}
 
\begin{theorem}
\label{thm:directional_optimal_B}
Consider the family of systems \eqref{lineardynamics} parameterized by $B \in \mathcal{B}$, where $\mathcal{U} \subset \mathbb{R}^d$ is a compact convex polytope (i.e., the convex hull of finitely many vertices $\{u_i\}_{i=1}^N$) and $T > 0$. Under Assumptions~1--2 of Problem~\ref{prob:main}, define $P_0 \triangleq e^{A^\top T} d \in \mathbb{R}^n$. For each vertex $u_i$ of $\mathcal{U}$, let
\[
B_i \triangleq \arg\max_{B\in\mathcal{B}}\, P_0^\top B u_i, \qquad
i^\ast \triangleq \arg\max_{1\le i\le N}\, P_0^\top B_i u_i.
\]
Then $B^\ast \triangleq B_{i^\ast}$ solves Problem~\ref{prob:main}, i.e., $B^\ast \in \arg\max_{B\in\mathcal{B}}\, G_d(B)$.
\end{theorem}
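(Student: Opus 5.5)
The plan is to compute $G_d(B)$ in closed form using the linear structure, and then swap the order of the two maximizations (over $B$ and over the vertices of $\mathcal{U}$). For \eqref{lineardynamics} with $f_0=1$, the co-state equation \eqref{CostateDynamics2} reduces to $\dot P=-A^\top P$. Imposing $P(T)=d$ gives $P(t)=e^{A^\top (T-t)}d$. By Assumption~2, $A^\top d=\lambda d$, and by Assumption~1, $\lambda\in\mathbb{R}$. Hence $P(t)=e^{\lambda(T-t)}d$, which is a \emph{positive} scalar multiple of $d$ for every $t\in[0,T]$. In particular $P(0)=P_0=e^{\lambda T}d$. Only the direction of $P(t)$ matters for the Hamiltonian maximization, and the scalar $e^{\lambda(T-t)}>0$ never flips sign. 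So the maximization of $\mathcal{H}=P(t)^\top(AX+Bu)$ over $u\in\mathcal{U}$ becomes the time-invariant problem $\max_{u\in\mathcal{U}} d^\top B u$. This is a linear program over a polytope, so a maximizer can be taken at a vertex $u_{j(B)}$, and the optimal control is constant in time.

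Next, I will substitute this control into the variation-of-constants formula:
\[
X_{d,B}=e^{AT}X_0+\int_0^T e^{A(T-s)}B\,u_{j(B)}\,ds .
\]
I will also note that $c_0$ is just the zero-input endpoint $e^{AT}X_0$. Using $d^\top e^{A(T-s)}=e^{\lambda(T-s)}d^\top$, this gives
\[
G_d(B)=\kappa\,\max_{1\le i\le N} d^\top B u_i,\qquad \kappa\triangleq\int_0^T e^{\lambda(T-s)}ds>0 .
\]
Equivalently, multiplying by $e^{\lambda T}>0$, $G_d(B)$ is a positive constant times $\max_i P_0^\top B u_i$. An alternative route to the same identity is to observe that $G_d(B)$ equals the support function of $\mathcal{R}_B(T;X_0)-c_0$ in direction $d$. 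Then $G_d(B)=\max_{u(\cdot)}\int_0^T d^\top e^{A(T-s)}Bu(s)\,ds$, and the pointwise maximization gives the same expression. I would use this support-function view to justify the formula rigorously.

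Finally, I will interchange the two maxima:
\[
\max_{B\in\mathcal{B}}\max_i P_0^\top B u_i=\max_i\max_{B\in\mathcal{B}}P_0^\top B u_i=\max_i P_0^\top B_i u_i .
\]
The inner maxima exist because $B\mapsto P_0^\top Bu_i$ is linear, hence continuous, and $\mathcal{B}$ is compact. For the optimal index $i^\ast$, I will chain
\[
G_d(B_{i^\ast})\ \ge\ \kappa' P_0^\top B_{i^\ast}u_{i^\ast}\ \ge\ \kappa' P_0^\top B u_i \quad\text{for all } B\in\mathcal{B},\ i,
\]
where $\kappa'>0$ is the constant relating $G_d(B)$ to $\max_i P_0^\top B u_i$. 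Taking the maximum over $i$ on the right yields $G_d(B_{i^\ast})\ge G_d(B)$ for every $B\in\mathcal{B}$.

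I expect the main obstacle to be the well-definedness of $X_{d,B}$ when the linear program $\max_{u\in\mathcal{U}} d^\top Bu$ has ties, for example $d^\top B=0$ or a whole face of $\mathcal{U}$ being optimal. In that case PMP does not single out one trajectory. I would handle this by proving the formula for $d^\top X_{d,B}$ rather than for $X_{d,B}$ itself. Every PMP-admissible selection gives the same value $\kappa\max_i d^\top Bu_i$, because the integrand is maximized pointwise. So $G_d$ is well defined regardless of which boundary point is chosen. The positivity of $e^{\lambda(T-t)}$, which is where the real-eigenvalue assumption enters, is exactly what makes the pointwise maximizer time-independent.
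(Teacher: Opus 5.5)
Your argument is correct, but it links $G_d$ to the vertex objective by a different mechanism than the paper. The paper never computes $X_{d,B}$. It uses constancy of the maximized Hamiltonian along the extremal to evaluate it at $t=0$, where $\mathcal{H}(0)=P_0^\top AX_0+\max_{u\in\mathcal{U}}P_0^\top Bu$ has a $B$-independent first term. It then writes $\mathcal{H}(t)=\lambda(t)\,d^\top\dot X(t)$, with the positive scalar $\lambda(t)$ defined by $P(t)=\lambda(t)d$, and integrates over $[0,T]$. You instead solve the state equation by variation of constants and obtain the exact identity $G_d(B)=\kappa'\max_i P_0^\top Bu_i$ with $\kappa'=\kappa e^{-\lambda T}>0$. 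After that, the theorem is just the interchange $\max_B\max_i=\max_i\max_B$.

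What your route buys is rigor. The paper passes from pointwise-in-$t$ maximization of $d^\top\dot X(t)$ to maximization of its integral ``since all operations are linear in $B$,'' but that is not the real reason, because $B\mapsto\max_u P^\top Bu$ is not linear. The step holds only because constancy of $\mathcal{H}$ gives $d^\top\dot X(t)=\mathcal{H}(0)/\lambda(t)$ with $\lambda(t)$ independent of $B$, so one $B$ maximizes it at every $t$ simultaneously; your closed form makes this explicit. Your treatment of ties also fills a case the paper passes over. The identity $G_d(B)=\int_0^T\max_{u\in\mathcal{U}}P(s)^\top Bu\,ds$ behind your support-function view holds even without Assumptions~1--2. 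It therefore pinpoints what fails in the relaxed cases: the integrand is no longer a positive multiple of a single fixed functional of $B$. The paper's route, in turn, stays inside the Hamiltonian picture it reuses in its heuristic discussion of those cases.

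One small correction to your closing remark: positivity of $e^{\lambda(T-t)}$ does not actually rely on Assumption~1. A real nonzero eigenvector $d$ of the real matrix $A^\top$ already forces your eigenvalue $\lambda$ to be real, so Assumption~2 alone suffices.
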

 
\begin{proof}
We reduce the joint $(B,u)$ maximization of the Hamiltonian to a finite candidate set $\{B_i\}$ determined by the vertices of $\mathcal{U}$, then show that maximizing the Hamiltonian implies maximizing $G_d(B)$.
 
\medskip
\noindent\textbf{Step 1: PMP for the linear system and vertex structure of the optimal control.}
For the linear system
\[
\dot{X}(t)=AX(t)+Bu(t), \qquad X(0)=X_0,
\]
with $u(t)\in\mathcal{U}$, where $\mathcal{U}$ is a convex polytope,
fix a direction of interest $d\in\mathbb{R}^n$ with $\|d\|=1$ and impose the terminal co-state condition $P(T)=d$ (see Section~\ref{computingreachableset}), since we seek the boundary point $X_d(T)$. By PMP, the costate \eqref{CostateDynamics2} satisfies 
\begin{equation}
\dot{P}(t) = -A^\top P(t),
\qquad
P(T)=d,
\label{eq:costate_linear}
\end{equation}
whose solution is
\begin{equation}
P(t)=\exp\!\big(-A^\top(t-T)\big)\,d.
\label{eq:Pt_solution}
\end{equation}
The Hamiltonian is given by
\begin{equation}
\mathcal{H}(t) = \max_{u(t)\in\mathcal{U}} \Big(P(t)^\top AX(t) + P(t)^\top B u(t)\Big).
\label{eq:H_linear}
\end{equation}
For a given $A,B$ and $X(t)$, the term $P(t)^\top AX(t)$ is independent of $u$, hence the
optimal control satisfies the pointwise maximization
\begin{equation}
u_B^\ast(t)\in\arg\max_{u\in\mathcal{U}} P(t)^\top Bu.
\label{eq:Ustar_vertex}
\end{equation}
Since $\mathcal{U}$ is a convex polytope and $u \mapsto P(t)^\top Bu$ is linear in $u$, the maximum over $\mathcal{U}$ is attained at one of its vertices. Let $\{u_i\}_{i=1}^N$ denote the vertices of
$\mathcal{U}$. Then, for each $t$,
\begin{equation}
\max_{u\in\mathcal{U}} P(t)^\top Bu = \max_{1\le i\le N} P(t)^\top B u_i.
\label{eq:vertex_reduction}
\end{equation}
 
\medskip
\noindent\textbf{Step 2: Hamiltonian maximization over $B$.}
A direct attempt to optimize $\mathcal{H}(t)$ jointly over $u(t)\in\mathcal{U}$ and $B\in\mathcal{B}$
at every time,
\[
\mathcal{H}(t)=\max_{u(t)\in\mathcal{U},\,B\in\mathcal{B}}
\Big(P(t)^\top AX(t) + P(t)^\top Bu(t)\Big),
\]
would in general produce a time-varying $B(t)$, which is not admissible since $B$ is a fixed system parameter. Instead, we use the fact that $\mathcal{H}(t)$ is constant along any boundary trajectory \cite{pontryagin2018mathematical}. Therefore, maximizing $\mathcal{H}$ over $B \in \mathcal{B}$ is equivalent to maximizing it at any single time; we choose $t = 0$:
Fixing $t = 0$, the Hamiltonian (evaluated at the optimal control) is
\begin{equation}
\mathcal{H}(0) = P(0)^\top A X(0) + \max_{u \in \mathcal{U}}\, P(0)^\top B u,
\label{eq:H0_tilde}
\end{equation}
where the term $P(0)^\top A X(0)$ is independent of $B$, so maximizing $\mathcal{H}(0)$ over $B \in \mathcal{B}$ reduces to maximizing $\max_{u \in \mathcal{U}} P(0)^\top B u$ over $B$.

\noindent Define
\begin{equation}
P_0 \triangleq P(0) = \exp\!\big(A^\top T\big)\,d,
\label{eq:P0_def_slides}
\end{equation}
which follows from \eqref{eq:Pt_solution} evaluated at $t=0$. Substituting $P(0) = P_0$ into \eqref{eq:H0_tilde} and using the vertex property \eqref{eq:vertex_reduction}, the maximization over $B$ and $u$ becomes
\begin{equation}
\max_{B\in\mathcal{B},\, u\in\mathcal{U}} P_0^\top B u
= \max_{B\in\mathcal{B}} \max_{1 \le i \le N} P_0^\top B u_i.
\label{eq:H0_substituted}
\end{equation}
Moreover, by the vertex property in \eqref{eq:vertex_reduction}, for any $B$ the maximizing control
$u^*_B(0)$ must be one of the vertices $\{u_i\}_{i=1}^N$.
So, we evaluate the maximization across all vertices.

\medskip
\noindent\textbf{Step 3: Finite candidate set $\{B_i\}$ and global selection.}
For each vertex $u_i$ of $\mathcal{U}$, define the candidate matrix
\begin{equation}
B_i \triangleq \arg\max_{B\in\mathcal{B}} P_0^\top B u_i.
\label{eq:Bi_def}
\end{equation}
Then select the best vertex--matrix pair via
\begin{equation}
i^\ast \triangleq \arg\max_{1\le i\le N} \; P_0^\top B_i u_i,
\label{eq:i_star_def}
\end{equation}
and set $B^\ast \triangleq B_{i^\ast}$. By construction, the pair $(B^\ast, u_{i^\ast})$
achieves the global maximum of the initial-time Hamiltonian over $B\in\mathcal{B}$ and $u\in\mathcal{U}$, and thus yields the globally maximal Hamiltonian level among admissible constant matrices $\mathcal{B}$.

\medskip
\noindent\textbf{Step 4: Maximizing the Hamiltonian implies maximizing $G_d(B)$.}
We now show that the $B^\ast$ found in Step~3 also maximizes the directional growth metric $G_d(B)$. The key observation is that, under Assumption~2, the co-state $P(t)$ remains proportional to $d$ for all $t \in [0,T]$. Specifically, from \eqref{eq:Pt_solution}, since $d$ is an eigenvector of $A^\top$, $P(t)$ remains collinear with $d$ for all $t \in [0,T]$. Hence $P(t) = \lambda(t)\,d$ for some scalar-valued function $\lambda : [0,T] \to \mathbb{R}$, where
\begin{equation}
\lambda(t) \triangleq \frac{P(t) \cdot d}{\|d\|^2} = P(t) \cdot d \quad (\text{since } \|d\|=1).
\label{eq:P_lambda_d}
\end{equation}
Under Assumption~1 ($A^\top$ has only real eigenvalues), $\lambda(t)$ remains real for all $t \in [0,T]$. Since $P(T) = d$ implies $\lambda(T) = 1 > 0$, and $\lambda(t) = e^{\mu(T-t)}$ where $\mu$ is the real eigenvalue of $A^\top$ associated with $d$ \cite{pontryagin2018mathematical}, $\lambda(t) > 0$ for all $t \in [0,T]$.

\noindent For the linear system, along the optimal trajectory (where $u = u^*_B(t)$), the Hamiltonian evaluates to
\begin{equation}
\mathcal{H}(t) = P(t)^\top \dot{X}(t),
\label{eq:H_equals_PdotX}
\end{equation}
since $P(t)^\top A X(t) + P(t)^\top B u^*_B(t) = P(t)^\top (AX(t) + Bu^*_B(t)) = P(t)^\top \dot{X}(t)$.
Substituting \eqref{eq:P_lambda_d} into \eqref{eq:H_equals_PdotX} gives
\begin{equation}
\mathcal{H}(t) = \lambda(t)\, d^\top \dot{X}(t).
\label{eq:H_lambda_d}
\end{equation}
Since $\lambda(t) > 0$ for all $t \in [0,T]$, maximizing $\mathcal{H}(t)$ over $B \in \mathcal{B}$ is equivalent to maximizing $d^\top \dot{X}(t)$, i.e.,
\[
\max_{B\in\mathcal{B}} \mathcal{H}(t)
\;\Longleftrightarrow\;
\max_{B\in\mathcal{B}} \lambda(t)\,d^\top\dot{X}(t)
\;\Longleftrightarrow\;
\max_{B\in\mathcal{B}} d^\top\dot{X}(t).
\]

Integrating both sides over $t \in [0,T]$ and noting that maximizing $d^\top \dot{X}(t)$ pointwise over $B$ implies maximizing its time integral (since all operations are linear in $B$):
\begin{align}
\max_{B\in\mathcal{B}} d^\top \dot{X}(t)
&\implies \max_{B\in\mathcal{B}} \int_0^T d^\top \dot{X}(t)\,dt \nonumber\\
&= \max_{B\in\mathcal{B}} d^\top \int_0^T \dot{X}(t)\,dt \nonumber\\
&= \max_{B\in\mathcal{B}} d^\top\!\big(X_{d,B}(T)-X_0\big) \label{eq:integrate_to_G}\\
&= \max_{B\in\mathcal{B}} d^\top\!\big(X_{d,B}(T)-c_0\big) \nonumber\\
&= \max_{B\in\mathcal{B}} G_d(B). \nonumber
\end{align}
Here, the third line uses the fundamental theorem of calculus, and the fourth uses that $c_0 = X_0 + \int_0^T AX(t)\,dt$ equals $X_0$ plus a $B$-independent term, so $X_0$ and $c_0$ differ by a constant. Therefore, the $B^\ast = B_{i^\ast}$ that maximizes $\mathcal{H}(0)$ also maximizes $G_d(B)$.
\end{proof}

\noindent The proof is constructive: given $\mathcal{U}$ in vertex representation (i.e., as the convex hull of $\{u_i\}_{i=1}^N$, which is how polytopes are typically stored \cite{ziegler1995lectures}), $B^\ast$ is recovered by solving $N$ linear optimization problems of the form $\max_{B\in\mathcal{B}} P_0^\top B u_i$ and selecting the maximizing pair. This is summarized in Algorithm~\ref{alg:iterative_B_update} for completeness. Note that the algorithm as stated applies to the class of linear systems satisfying Assumptions~1 and~2.
\begin{algorithm}[t]
\caption{Optimal Input Matrix $B^\ast$}
\label{alg:iterative_B_update}
\begin{algorithmic}[1]
\STATE \textbf{Given:} Linear system $\dot{x}(t)=Ax(t)+Bu(t)$, control set $\mathcal{U}$ (convex polytope with vertices $\{u_i\}_{i=1}^N$), admissible matrix set $\mathcal{B}$ (compact), terminal direction $d$ (eigenvector of $A^\top$, $\|d\|=1$), horizon $T > 0$, and $A$ with real eigenvalues. \textit{(Assumptions~1 and~2 of Theorem~\ref{thm:directional_optimal_B}.)}

\STATE Compute $P_0 \gets \exp(A^\top T)\, d$

\STATE Determine the set of vertices $\{U_i\}_{i=1}^N$ of $\mathcal{U}$

\FOR{$i = 1$ to $N$}
    \STATE Compute
    \[
        B_i \gets \arg\max_{B \in \mathcal{B}} P_0^\top B U_i
    \]
\ENDFOR

\STATE Select optimal index
\[
    i^\ast \gets \arg\max_{1 \le i \le N} P_0^\top B_i U_i
\]

\STATE \textbf{Return:} $B^\ast \gets B_{i^\ast}$
\end{algorithmic}
\end{algorithm}

\subsection{Relaxing Assumptions of Theorem~\ref{thm:directional_optimal_B}}

This subsection discusses what happens when Assumptions 1 and 2 are relaxed. Rather than formal guarantees of global optimality, we present a heuristic explanation for why Algorithm~\ref{alg:iterative_B_update} continues to produce meaningful results in both cases.

\subsubsection{Complex eigenvalues of $A^\top$ (both Assumptions relaxed)}
\label{complexeigdiscussion}

When $A^\top$ has eigenvalues with nonzero imaginary part, the output $B^\ast$
of Algorithm~\ref{alg:iterative_B_update} expands $\mathcal{R}_{B^\ast}(T; X_0)$
in multiple directions simultaneously, not only along $d$.

The heuristic reasoning is as follows. From \eqref{eq:Pt_solution},
\[
P(t) = e^{-A^\top(t-T)}\,d,
\]
and when $A^\top$ has complex eigenvalues, this matrix exponential involves
oscillatory modes. $P(t)$ therefore rotates through multiple directions over
$[0,T]$ rather than remaining proportional to $d$. At each time, the optimal
control is
\[
u^\ast(t) \in \arg\max_{u \in \mathcal{U}}\, P(t)^\top B u,
\]
so the control tracks $P(t)$ as it sweeps through these directions. By the
integration argument of Step~4, growth is induced along every direction that
$P(t)$ visits. The breadth of this multi-directional expansion scales with
$|\operatorname{Im}(\lambda)|$: larger imaginary parts cause $P(t)$ to sweep a
wider arc over $[0,T]$. Crucially, the terminal condition still enforces $P(T) = d$, so among all the
directions swept by $P(t)$, the direction $d$ is always one of them. This means
the algorithm remains biased to include $d$ at the boundary point $X_{d,B}(T)$,
even as it simultaneously drives growth in other directions. The result is
therefore a near-good solution for $G_d$: not globally optimal as in
Theorem~\ref{thm:directional_optimal_B}, but directionally informed rather than
arbitrary.

\subsubsection{Real eigenvalues of $A^\top$ but $d$ not an eigenvector (Assumption~2 relaxed)}
\label{realeigendiscussion}

A similar argument holds for the case in which only Assumption 1 holds. When $d$ is not an eigenvector of $A^\top$, the scalar decomposition $P(t) = \lambda(t)\,d$ in \eqref{eq:P_lambda_d} no longer holds globally, so the equivalence between maximizing $\mathcal{H}(t)$ and maximizing $d^\top \dot{X}(t)$ point-wise breaks down. Consequently, the global optimality argument of Step~4 in the proof no longer applies, and $B^\ast$ returned by Algorithm~\ref{alg:iterative_B_update} is not guaranteed to be the global maximizer of $G_d$.

Even when $P(t) \neq \lambda(t)\,d$ for intermediate times, the terminal condition $P(T) = d$ anchors the co-state to the direction of interest at terminal time (the boundary point). $X_{d,B}(T)$ is determined by the trajectory endpoint, and the terminal co-state is always $d$ regardless of $B$. The algorithm therefore optimizes $\mathcal{H}$ along a trajectory that is \emph{guaranteed to terminate in the direction $d$}, which is why growth along $d$ is assured even without Assumption~2.

\section{Examples}
\label{sec:examples}

We illustrate Theorem~\ref{thm:directional_optimal_B} on two systems. The first (Section~\ref{sec:admire}) has all real eigenvalues and demonstrates controlled directional growth and shrinkage. The second (Section~\ref{sec:complexeig}) has complex eigenvalues, violating Assumption~1, and illustrates the multi-directional growth predicted by Section~\ref{realeigendiscussion}.

\subsection{ADMIRE Linearized Fighter Jet Model}
\label{sec:admire}
We demonstrate Theorem~\ref{thm:directional_optimal_B} on the linearized ADMIRE fighter jet model,
restricting attention to the states directly driven by the control inputs. The states are the roll,
pitch, and yaw rates, denoted $p$, $q$, and $r$ (rad/s), and the four inputs correspond
to deflections (in radians) of the canard wings, left and right elevons, and the rudder.
A linearized model was established in \cite{HARKEGARD2005137} as $\dot{x} = Ax + Bu$, where
$$
x =  
\begin{bmatrix}
p\\q\\r
\end{bmatrix}, \quad A =
\begin{bmatrix}
-0.9967 & 0       &  0.6176 \\
 0       & -0.5057 &  0      \\
-0.0939  & 0       & -0.2127
\end{bmatrix},
$$
$$
B =
\begin{bmatrix}
0       & -4.2423 &  4.2423 &  1.4871 \\
1.6532  & -1.2735 & -1.2735 &  0.0024 \\
0       & -0.2805 &  0.2805 & -0.8820
\end{bmatrix}.
$$
The control set $\mathcal{U}$ imposes symmetric barrier constraints of $\pm 0.1$ radians ($\approx \pm 5.7^\circ$) on each control surface, consistent with small-perturbation linearization validity. The
admissible set for $B$ is taken as $\mathcal{B} = \left\{ M \in \mathbb{R}^{3\times4} \;:\; \|M-B_{\mathrm{nom}}\|_F \le 0.5 \right\}$, a Frobenius norm ball of radius $0.5$ around the
nominal $B$. The time horizon is $T = 2$ seconds, chosen to allow several oscillation periods while remaining within the linear regime.

We first consider a design scenario in which a broader range of an expanded range of
roll rate $p$ is required. Note that the eigenvalues of $A$ are approximately $-0.997$, $-0.506$, and $-0.213$ (all real), and $d = [1,0,0]^\top$ is an eigenvector of $A^\top$, so both assumptions of Theorem~\ref{thm:directional_optimal_B} are satisfied. Running Algorithm~\ref{alg:iterative_B_update} for growth yields Fig.~\ref{fig:admirejetresult_P} (red: nominal $\mathcal{R}_B$, green: optimized $\mathcal{R}_{B^\ast}$). The reachable set has expanded along $d$ as guaranteed. Shrinkage along $p$ is obtained by replacing $\arg\max$ with $\arg\min$ in the optimization; the result is shown in Fig.~\ref{fig:admirejetresult_Pminus} (same color convention). This corresponds to minimizing $G_d(B)$ rather than maximizing it; Theorem~\ref{thm:directional_optimal_B} applies symmetrically to this case.

Finally, we consider growth along $d = [0.3536,\,0.6124,\,0.7071]^\top$, a direction with approximately equal pitch and yaw weighting (and half the roll weight), illustrating that the algorithm handles arbitrary eigenvector directions. The result is shown in Fig.~\ref{fig:admirejetresult}.
\begin{figure}[h!]
    \centering
    \includegraphics[width=1\linewidth]{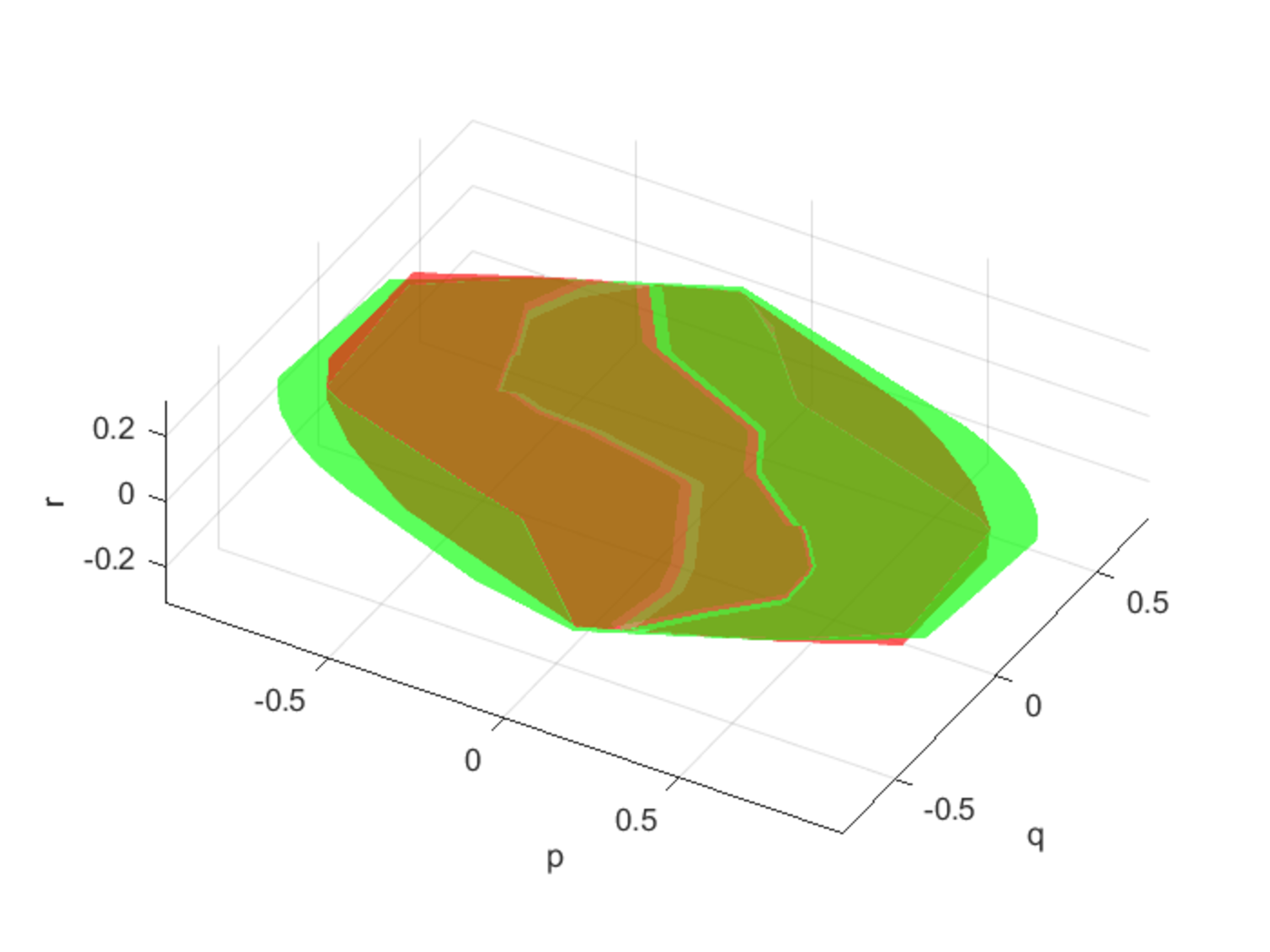}
    \caption{Growth along $p$ direction ($d = [1,0,0]^\top$). Red: nominal $\mathcal{R}_B$. Green: optimized $\mathcal{R}_{B^\ast}$.}
    \label{fig:admirejetresult_P}
\end{figure}

\begin{figure}[h!]
    \centering
    \includegraphics[width=1\linewidth]{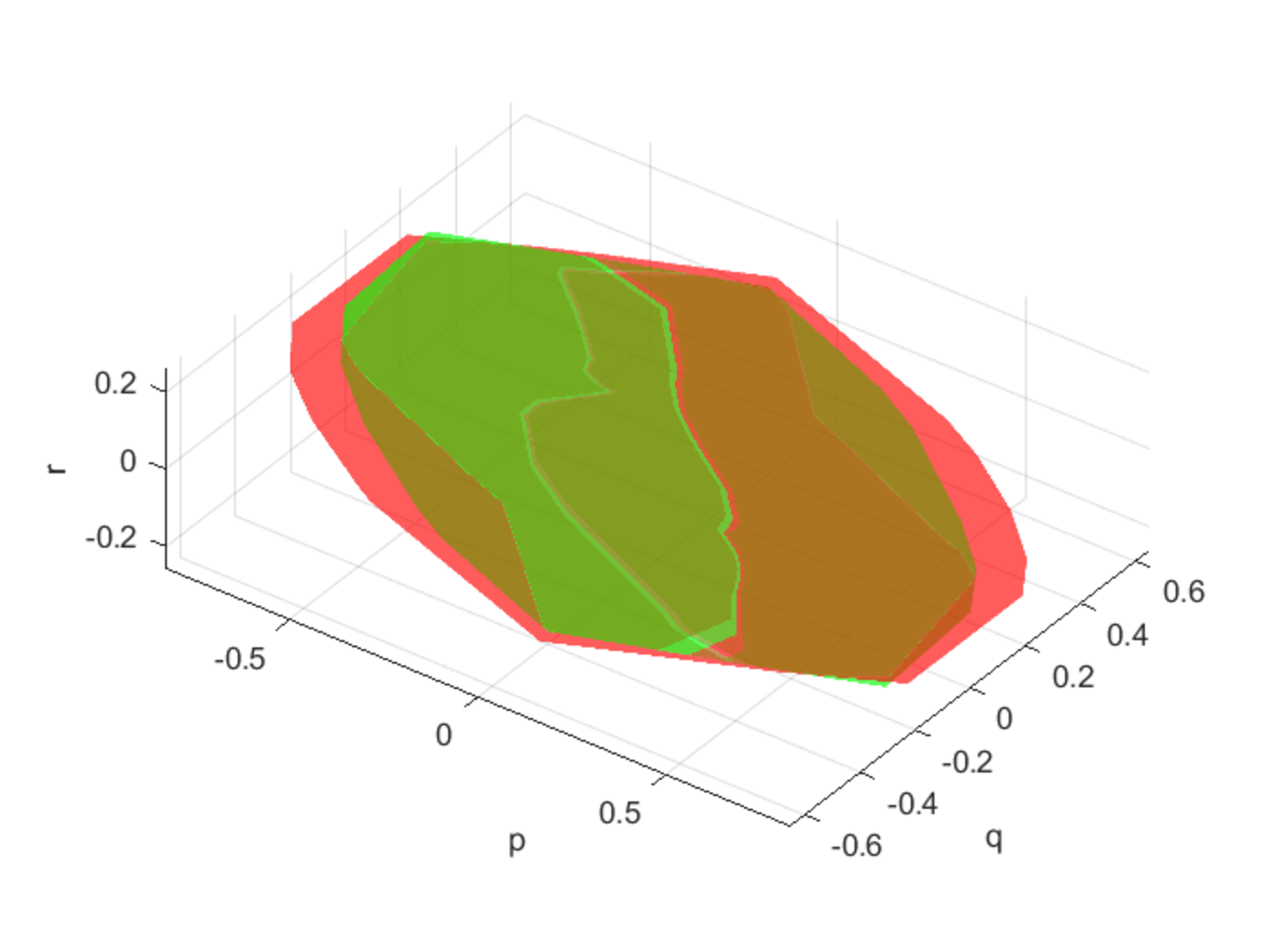}
    \caption{Shrinkage along $p$ direction ($d = [1,0,0]^\top$, $\arg\min$ variant). Red: nominal $\mathcal{R}_B$. Green: optimized $\mathcal{R}_{B^\ast}$.}
    \label{fig:admirejetresult_Pminus}
\end{figure}

\begin{figure}[h!]
    \centering
    \includegraphics[width=0.9\linewidth]{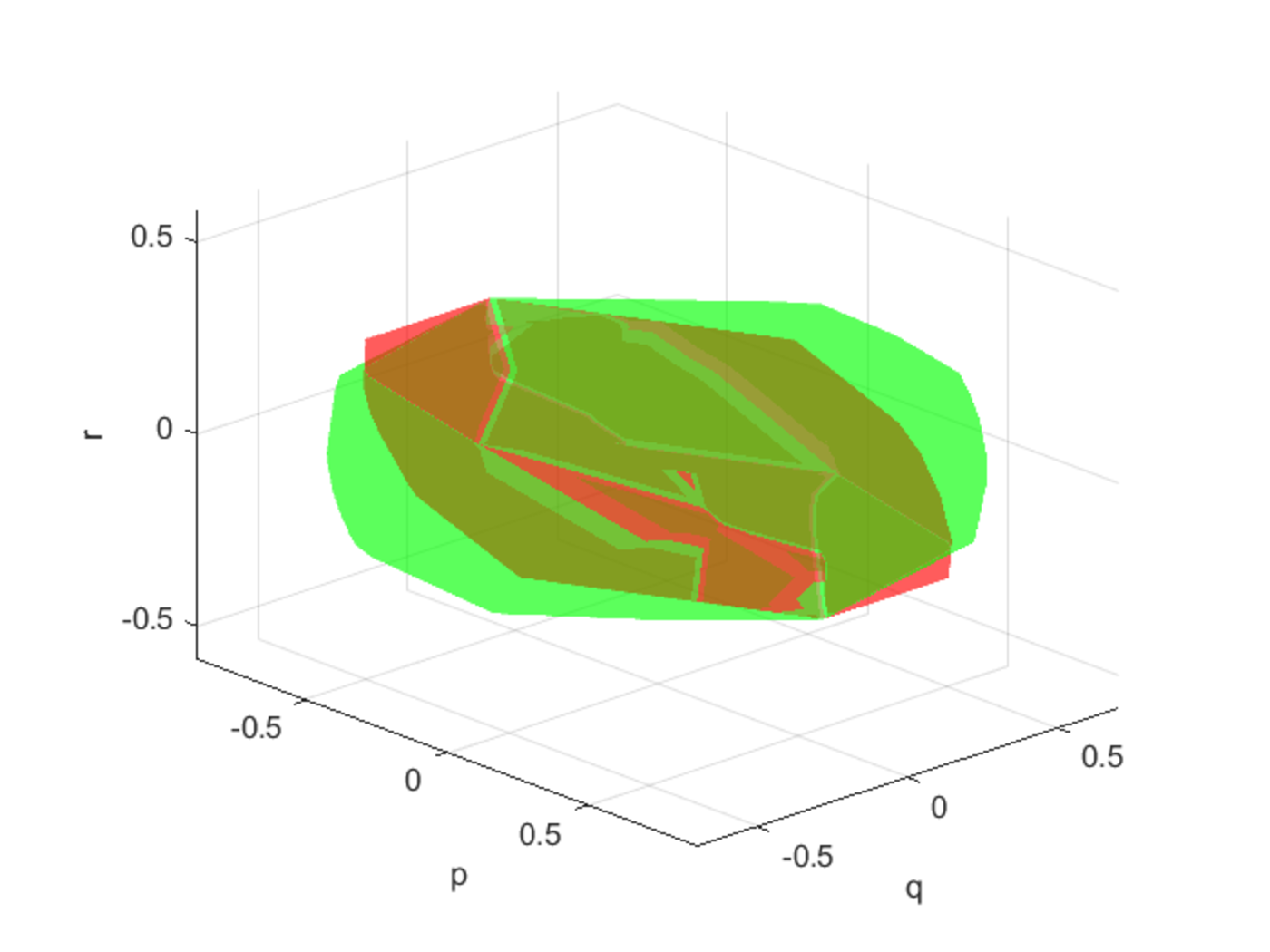}
    \caption{Growth along $d = [0.3536,\,0.6124,\,0.7071]^\top$ (approximately equal pitch and yaw weighting). Red: nominal $\mathcal{R}_B$. Green: optimized $\mathcal{R}_{B^\ast}$.}
    \label{fig:admirejetresult}
\end{figure}

\subsection{System with Imaginary Eigenvalues of $A$}
\label{sec:complexeig}
We next examine a stable damped oscillator. The system matrix $A$ has complex eigenvalues, so Theorem~\ref{thm:directional_optimal_B} does not apply and optimality of $B^\ast$ is not guaranteed; what is guaranteed, per Proposition~\ref{complexeigdiscussion}, is multi-directional growth. The two inputs act independently on velocity and position:
$$
A = \begin{bmatrix} 0 & 1 \\ -2 & -0.8 \end{bmatrix}, \quad
B_{\mathrm{nom}} = \begin{bmatrix} 0 & 1 \\ 1 & 0 \end{bmatrix},
\quad x \in \mathbb{R}^2.
$$
The control set is $\mathcal{U} = [-1,1]^2$. The admissible set is $\mathcal{B} = \left\{ M \in \mathbb{R}^{2\times2} \;:\; \|M - B_{\mathrm{nom}}\|_F \le 0.5 \right\}$. The time horizon is $T = 2$ seconds.
With $d = [1,0]^\top$, Algorithm~\ref{alg:iterative_B_update} is run for growth on
$\mathcal{R}_{B}(2;\,[0,0]^\top)$. The result is shown in Fig.~\ref{fig:imaginaryeigenvalues} (red: nominal $\mathcal{R}_{B_{\mathrm{nom}}}$, green: optimized $\mathcal{R}_{B^\ast}$). The reachable set appears to be a polytope because it is sampled at sparse points. As established in Section~\ref{complexeigdiscussion}, the reachable set expands in all directions rather than solely along $d$, because $P(t) = e^{-A^\top(t-T)}d$ oscillates through multiple directions over $[0,T]$ when $A^\top$ has complex eigenvalues.
\begin{figure}[h]
    \centering
    \includegraphics[width=1\linewidth]{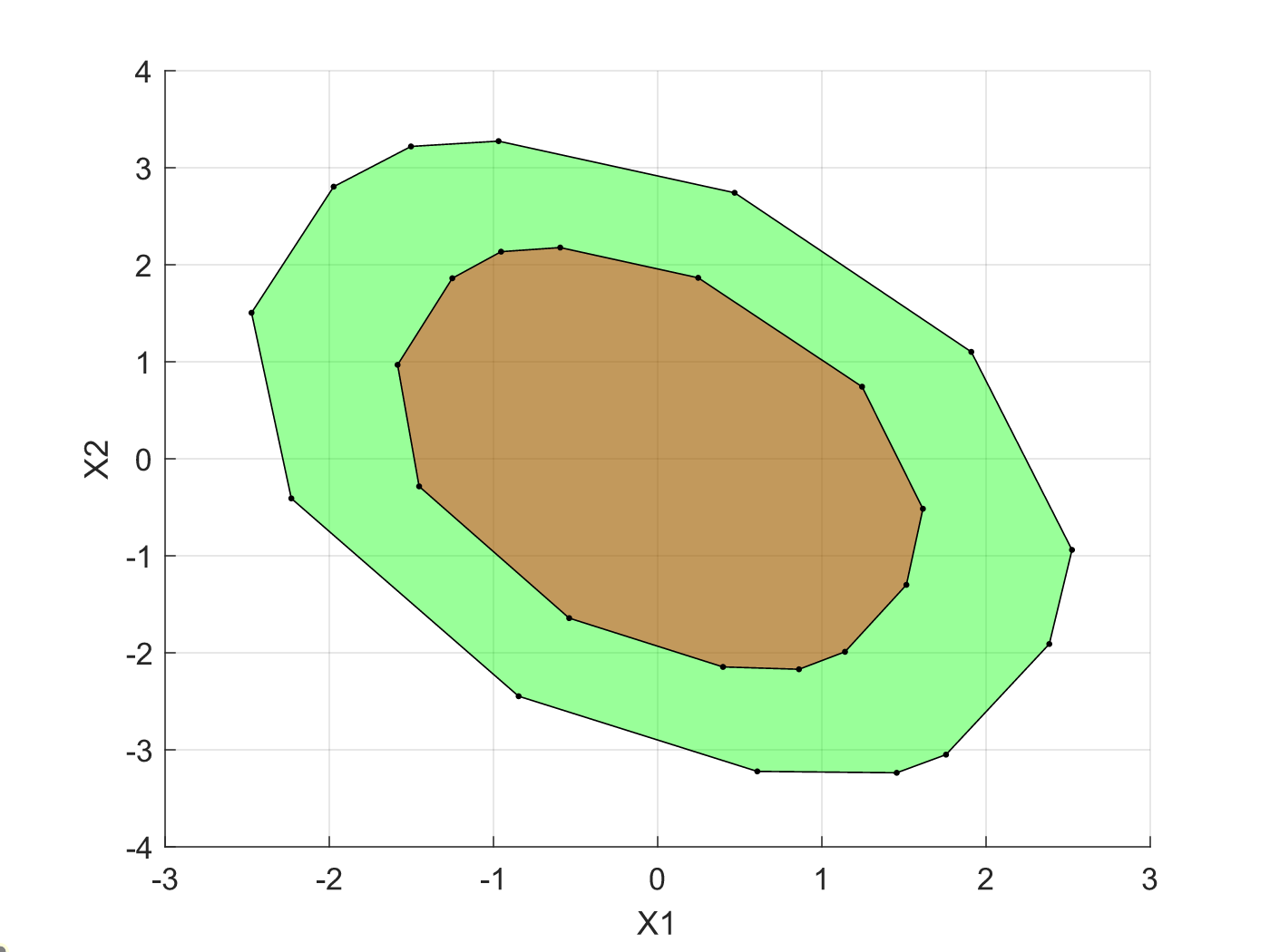}
    \caption{Reachable set growth for the damped oscillator with complex eigenvalues of $A$, $d = [1,0]^\top$. Red: nominal $\mathcal{R}_{B_{\mathrm{nom}}}$. Green: optimized $\mathcal{R}_{B^\ast}$. Horizontal axis: $x_1$. Vertical axis: $x_2$. Growth occurs in all directions, consistent with Section~\ref{complexeigdiscussion}.}
    \label{fig:imaginaryeigenvalues}
\end{figure}

\section{Summary and Conclusion}
\label{sec:Summaryandconclusion}
This paper addressed the problem of shaping the reachable set of a linear system by optimizing the input matrix $B$ within an admissible set $\mathcal{B}$. The objective was formalized through the Directional Growth Metric $G_d(B)$, which
quantifies the extent to which the reachable set is expanded or contracted along a prescribed direction of interest $d$. Theorem~\ref{thm:directional_optimal_B} established that, under the assumptions that $d$ is an eigenvector of $A^\top$ and that $A^\top$ has real eigenvalues, the joint optimization over $B$ and the control input reduces to $N$ linear subproblems, where $N$ is the number of vertices of the control polytope $\mathcal{U}$. The global maximizer $B^\ast$ of $G_d$ is then recovered by a single comparison across these $N$ candidates, as detailed in Algorithm~\ref{alg:iterative_B_update}. When these assumptions are relaxed, athough global optimality of $B^\ast$ is no longer guaranteed, but we heuristically show that the same method gives us good values of $B^\ast$. The algorithm was validated on two
systems: the ADMIRE linearized fighter jet model, demonstrating controlled
directional growth and shrinkage of the reachable set, and a damped oscillator with
complex eigenvalues, illustrating the multi-directional growth behavior predicted by
The relaxed-assumption analysis.
Several directions present themselves for future work. The most immediate extension
is to study the analogous problem of optimizing the system matrix $A$, and
subsequently the joint optimization of both $A$ and $B$, within their respective
admissible sets. Beyond linear systems, extending this framework to nonlinear dynamics
of the form $\dot{X} = f(X, u, B)$ presents a substantially richer challenge: unlike
the linear case, the co-state dynamics no longer admit a closed-form solution, and new
approaches (potentially drawing on numerical optimal control or learning-based
methods) will be required to make the optimization tractable.
\bibliography{mybib}{}
\bibliographystyle{unsrt}
\end{document}